\documentclass[
    10pt,
    amsmath,
    amssymb,
    aps,
    nobibnotes,
    prl,
    superscriptaddress,
    twocolumn
]{revtex4-2}
\usepackage{amsthm}
\usepackage{bbm}
\usepackage[T1]{fontenc}
\usepackage{footmisc}
\usepackage{mathtools}
\usepackage{thmtools}
\usepackage{titlesec}
\usepackage[caption=false]{subfig}

\usepackage{silence}
\usepackage[
    bookmarksopen=true,
    bookmarksopenlevel=1,
    breaklinks=true,
    pdfencoding=auto,
    pdfpagemode=UseNone
]{hyperref}
\usepackage{xcolor}
\definecolor{medblue}{RGB}{0, 0, 100}
\definecolor{panblue}{RGB}{0,24,150}
\definecolor{carmine}{RGB}{150, 0, 24}
\hypersetup{colorlinks,
    linkcolor=carmine,
    citecolor=medblue,
    urlcolor=panblue
}

\newtheorem{theorem}{Theorem}
\newtheorem{corollary}[theorem]{Corollary}
\newtheorem{lemma}{Lemma}
\usepackage[capitalise]{cleveref}

\usepackage{tikz}
\usetikzlibrary{positioning}
\usetikzlibrary{decorations.pathmorphing}
\tikzset{~/.style={decorate, decoration={
    snake, amplitude=.5mm, segment length=3mm
}}}
\usepackage{pgfplots}
\pgfplotsset{compat=1.18}
\pgfplotsset{
    legend image with text/.style={
        legend image code/.code={%
            \node[anchor=center] at (0.3cm,0cm) {#1};
        }
    },
}

\makeatletter
\newcommand{\E}{\if@display\mathop{\mathbb{E}}\else\mathbb{E}\fi}
\makeatother

\let\oldmid\mid
\renewcommand{\mid}{\hspace*{-2pt}\oldmid\hspace*{-2pt}}

\DeclareMathOperator{\supp}{supp}
\DeclarePairedDelimiter{\abs}{\lvert}{\rvert}
\DeclarePairedDelimiter{\norm}{\lVert}{\rVert}
\DeclarePairedDelimiter{\ceil}{\lceil}{\rceil}
\DeclarePairedDelimiterX{\inner}[2]{\langle}{\rangle}{#1, #2}

\newcommand{\na}{\abs{A_i}}
\newcommand{\nx}{\abs{X_i}}

\titleformat{\section}[runin]{\itshape}{\appendixname~\thesection:~}{0pt}{}[---]
\titlespacing{\section}{\parindent}{\baselineskip}{0pt}

\newenvironment{proof*}{}{\hfill\qedsymbol}
\renewcommand{\qedsymbol}{$\blacksquare$}
\newtheoremstyle{prlthm}{}{}{}{\parindent}{\itshape}{\unskip---}{0.16667em}{\thmname{#1} \thmnumber{#2} \thmnote{#3}}
\theoremstyle{prlthm}
\renewcommand{\eqref}{\cref}

\AddToHook{cmd/appendix/before}{
    \onecolumngrid
    \begin{center}
    \textbf{\large End Matter}
    \end{center}
    \twocolumngrid
    \crefalias{section}{appendix}
    \setcounter{secnumdepth}{2} 
    \counterwithin{lemma}{section} 
    
}

\begin{document}

\title{Randomness Compression in Communication Networks}

\author{Yukari Uchibori}
\affiliation{Department of Physics and Astronomy, University of British Columbia, Vancouver, BC, Canada}

\author{Alice Zheng}
\thanks{Contact author: alicezheng@vt.edu}
\affiliation{Department of Computer Science, Virginia Polytechnic Institute and State University, Blacksburg, VA, USA}

\author{Anurag Anshu}
\affiliation{School of Engineering and Applied Sciences, Harvard University, Cambridge, MA, USA}

\author{Jamie Sikora}
\affiliation{Department of Computer Science, Virginia Polytechnic Institute and State University, Blacksburg, VA, USA}

\makeatletter
\hypersetup{
    pdftitle={\@title},
    pdfauthor={Yukari Uchibori, Alice Zheng, Anurag Anshu, Jamie Sikora}
}
\makeatother

\date{March 16, 2026}

\begin{abstract}
    Given a correlation generated by a (possibly quantum) communication network, we study the amount of shared randomness required to generate it.
    We develop a novel upper bound for approximating distributions generated by arbitrary networks and showcase instances where it significantly outperforms the best-known upper bounds for the exact case.
    This demonstrates that one can have substantial savings in resources if small perturbations are acceptable.
    We derive our bound using Hoeffding's inequality and apply it to various commonly-used communication networks such as the Bell scenario and triangle scenario.
\end{abstract}

\maketitle

\section{Introduction}

The phenomenon of quantum nonlocality reveals a multitude of peculiar properties of Nature.
From its theoretical conception~\cite{bell1964einstein} to the, now many, experimental realizations \cite{freedman1972experimental, fry1976experimental, aspect1982experimental, tittel1998violation, weihs1998violation, rowe2001experimental, moehring2004experimental} (including more recent loophole-free ones \cite{hensen2015loophole, shalm2015strong, giustina2015significant, storz2023loophole, drmota2025experimental}), nonlocality has ingrained itself as a conventional demonstration of practical quantum behavior \cite{brunner2014bell, scarani2019bell}.
It has relevance to various fields such as information theory \cite{cubitt2011zero}, communication complexity \cite{buhrman2010nonlocality}, and quantum cryptography \cite{sikora2014strong}, aiding the study of topics such as device-independent self-testing \cite{cirel1980quantum, mayers1998quantum} and dimension witnesses \cite{brunner2008testing}.
Determining the resources needed to perform certain tasks is crucial in practice and extends to many other areas of mathematics, computer science, and physics.
For example, given statistical data from a quantum experiment, it is possible to use nonlocality to devise device-independent tests to certify the necessity of entanglement~\cite{sikora2016device,wei2017device,wei2019device}.
On the other hand, we may seek to reproduce a given experiment in a smaller laboratory using fewer resources.
The latter may seem unlikely, as one can prove the existence of large Hilbert spaces~\cite{sikora2016minimum} and large entanglement~\cite{coladangelo2017all}.
In contrast, we demonstrate that in most tasks involving causal networks, large classical resources are not necessary to reproduce approximate probability distributions.

We illustrate in \cref{fig:examples} examples of networks with communication restrictions that have been widely studied.
These two networks are commonly used for various procedures studying phenomena such as nonconvexity \cite{donohue2015identifying}, nonlocality \cite{sunilkumar2025genuine}, and classical-quantum gaps \cite{da2025local}.

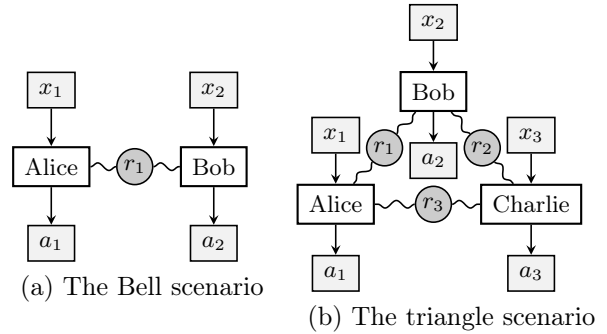
\begin{figure}[htp]
    \centering
    \subfloat[The Bell scenario]{%
        \begin{tikzpicture}[> = stealth, semithick, baseline]
    \tikzstyle{randomness}=[circle,draw=black,fill=black!20,inner sep=0.5mm]
    \tikzstyle{input}=[draw=black,fill=black!5,inner sep=1.5mm]
    \tikzstyle{output}=[draw=black,fill=black!5,inner sep=1.5mm]
    \tikzstyle{party}=[thick,draw=black,fill=white,inner sep=1.5mm]
    \node[party] (A) at (-1.11,0) {Alice};
    \node[party] (B) at (1.05,0) {Bob};
    \node[output,below=.5cm of A] (a1) {$a_1$};
    \node[output,below=.5cm of B] (a2) {$a_2$};
    \draw[->] (A) -- (a1);
    \draw[->] (B) -- (a2);
    \node[randomness] (r1) at (0,0) {$r_1$};
    \draw[~] (r1) -- (A);
    \draw[~] (r1) -- (B);
    \node[input,above=.5cm of A] (x1) {$x_1$};
    \node[input,above=.5cm of B] (x2) {$x_2$};
    \draw[->] (x1) -- (A);
    \draw[->] (x2) -- (B);
\end{tikzpicture}
        \label{fig:bell}
    }\hspace*{1em}%
    \subfloat[The triangle scenario]{%
        \begin{tikzpicture}[> = stealth, semithick, baseline]
    \tikzstyle{randomness}=[circle,draw=black,fill=black!20,inner sep=0.5mm]
    \tikzstyle{input}=[draw=black,fill=black!5,inner sep=1.5mm]
    \tikzstyle{output}=[draw=black,fill=black!5,inner sep=1.5mm]
    \tikzstyle{party}=[thick,draw=black,fill=white,inner sep=1.5mm]
    \pgfmathsetmacro{\phi}{210} 
    \pgfmathsetmacro{\rx}{1.5}
    \pgfmathsetmacro{\ry}{1}
    \node[party] (A) at ({\rx*cos(\phi)}, {\ry*sin(\phi)}) {Alice};
    \node[party] (B) at ({\rx*cos(\phi-120)}, {\ry*sin(\phi-120)}) {Bob};
    \node[party] (C) at ({\rx*cos(\phi-240)}, {\ry*sin(\phi-240)}) {Charlie};
    \node[output,below=.4cm of A] (a1) {$a_1$};
    \node[output,below=.4cm of B] (a2) {$a_2$};
    \node[output,below=.4cm of C] (a3) {$a_3$};
    \draw[->] (A) -- (a1);
    \draw[->] (B) -- (a2);
    \draw[->] (C) -- (a3);
    \node[randomness] (r1) at ({\rx*(cos(\phi)+cos(\phi-120))/2}, {\ry*(sin(\phi)+sin(\phi-120))/2}) {$r_1$};
    \node[randomness] (r2) at ({\rx*(cos(\phi-120)+cos(\phi-240))/2}, {\ry*(sin(\phi-120)+sin(\phi-240))/2}) {$r_2$};
    \node[randomness] (r3) at ({\rx*(cos(\phi-240)+cos(\phi))/2}, {\ry*(sin(\phi-240)+sin(\phi))/2}) {$r_3$};
    \draw[~] (r1) -- (A);
    \draw[~] (r1) -- (B);
    \draw[~] (r2) -- (B);
    \draw[~] (r2) -- (C);
    \draw[~] (r3) -- (C);
    \draw[~] (r3) -- (A);
    \node[input,above=.4cm of A] (x1) {$x_1$};
    \node[input,above=.4cm of B] (x2) {$x_2$};
    \node[input,above=.4cm of C] (x3) {$x_3$};
    \draw[->] (x1) -- (A);
    \draw[->] (x2) -- (B);
    \draw[->] (x3) -- (C);
\end{tikzpicture}
        \label{fig:triangle}
    }%
    \caption{%
        Examples of networks utilizing shared classical resources.
        The $x_i$s are inputs, the $a_i$s are outputs, and the $r_i$s are sources of shared randomness, each shared between a selected subset of the parties.
    }
    \label{fig:examples}
\end{figure}

The setup in \cref{fig:bell} is called the \emph{Bell scenario}.
It disallows communication between Alice and Bob but gives both access to a shared resource.
When this resource is a source of randomness, both parties receive an identical classical value $r_1$ that they may use to compute their outputs $a_1$ and $a_2$ from the respective inputs $x_1$ and $x_2$.
The distribution $p(a_1, a_2 \mid x_1, x_2)$ of this network is
\begin{equation}\label{eq:bellprob}
    \E_{r_1} \big[ \, p_A(a_1 \mid x_1, r_1) \cdot p_B(a_2 \mid x_2, r_1) \, \big]\ ,
\end{equation}
where $p_A$ and $p_B$ describe Alice and Bob's output strategies, which may or may not involve private randomness.

Another well-studied setting with no communication is the \emph{triangle scenario} depicted in \cref{fig:triangle}, which involves three parties and three sources of shared randomness, none of which is available to all three parties simultaneously.
In this setting, the outcome distribution $p(a_1, a_2, a_3 \mid x_1, x_2, x_3)$ is similar to \eqref{eq:bellprob}, being
\begin{equation}
    \E_{r_1, r_2, r_3} \big[ p(a \mid x, r_1, r_2, r_3) \big],
\end{equation}
where $r_1$, $r_2$, and $r_3$ are independent and $p(a \mid x, r_1, r_2, r_3)$ can be decomposed as%
\begin{equation}\label{eq:triangleprob}
    p_A(a_1 \mid x_1, r_1, r_3) \cdot p_B(a_2 \mid x_2, r_1, r_2) \cdot p_C(a_3 \mid x_3, r_2, r_3)
\end{equation}
with $p_C$ describing Charlie's output strategy.
It is important to note that without imposing the structure of \eqref{eq:triangleprob}, $p(a \mid x, r_1, r_2, r_3)$ could take the form of any conditional joint distribution, and only certain distributions can be generated from the triangle scenario in \cref{fig:triangle} (for instance, see \cite[Examples~1~and~2]{wolfe2019inflation}).

Despite the many ways to create networks, its defining properties are the cardinality of its inputs, outputs, shared resources (whether they are classical, quantum, or something else), and the nature of its communication.
When focusing on its classical resources, such as shared randomness, we can write the outcome probability distribution $p(a \mid x)$ in a general network as
\begin{equation} \label{eq:general_P_classical_form}
    \E_{r_1, \ldots, r_m}
    \big[ \, p(a \mid x, r_1, \ldots, r_m) \, \big]\ ,
\end{equation}
and the network may be viewed as in \cref{fig:general}.

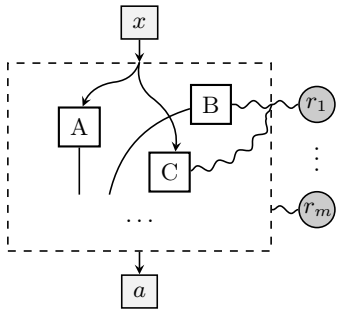
\begin{figure}[htp]
    \centering
    \begin{tikzpicture}[> = stealth, semithick]
    \tikzstyle{randomness}=[circle,draw=black,fill=black!20,inner sep=0.5mm]
    \tikzstyle{input}=[draw=black,fill=black!5,inner sep=1.5mm]
    \tikzstyle{output}=[draw=black,fill=black!5,inner sep=1.5mm]
    \tikzstyle{party}=[thick,draw=black,fill=white,inner sep=1.5mm]
    \node[rectangle,dashed,draw=black,minimum height=2.5cm,minimum width=3.5cm] (network) at (0, 0) {};
    \node[output,below=.3cm of network] (a) {$a$};
    \draw[->] (network.south) -- (a);
    \node[party] (A) at (-.8, .4) {A};
    \node[party] (B) at (0.95, .7) {B};
    \node[party] (C) at (.4, -.2) {C};
    \node at (0.03, -.85) {$\dots$};
    \draw[->] (network.north) to[bend left] (-.31, 1) to[bend right] (A);
    \draw[->] (network.north) to[bend right] (.15, .8) to[bend left] (C);
    \draw (A) -- (-.8, -.5);
    \draw (B) to[bend right] (-.4, -.5);
    \coordinate[above=.7cm of network.east] (r1anchor);
    \coordinate[below=.7cm of network.east] (rmanchor);
    \node[randomness,right=.35cm of r1anchor] (r1) {$r_1$};
    \node[randomness,right=.35cm of rmanchor] (rm) {\rlap{\hspace*{-0.1em}$r_m$}\phantom{$r_1$}};
    \path (r1) -- (rm) node[midway,rotate=90] {$\dots$};
    \draw[~] (r1) -- (r1anchor);
    \draw[~] (rm) -- (rmanchor);
    \draw[~] (r1anchor) -- (B);
    \draw[~] (C) to[bend right] (r1anchor);
    \node[input,above=.3cm of network] (x) {$x$};
    \draw[->] (x) -- (network.north);
\end{tikzpicture}
    \caption{%
        General communication network with input $x \in X$, output $a \in A$, and $m$ sources of shared randomness $r_i \in R_i$.
        All other resources such as shared entanglement and classical and quantum communication are within the dotted box.
    }
    \label{fig:general}
\end{figure}

In this Letter, we provide a novel upper bound on the amount of shared randomness needed to approximate any distribution generated by a network of any given size.
To accomplish this for a network generating a distribution $p(a \mid x)$, we replace its randomness sources $R_i$ with $Q_i$ (often of smaller cardinalities) while ensuring that the newly-generated distribution $\hat{p}(a \mid x)$ is $\epsilon$-close to the original.
For an example, consider the Bell Scenario in \cref{fig:bell} with input sets $X_1$, $X_2$, output sets $A_1$, $A_2$, and randomness source $R$.
Then, any Bell distribution can be approximated via a randomness source $Q$ of cardinality
\begin{equation}\label{eq:bell}
    \abs{Q} = \ceil*{ \frac{1}{2 \epsilon^2}\ln(2 \abs{X_1} \abs{X_2} \abs{A_1} \abs{A_2}) }\ .
\end{equation}
For the triangle scenario in \cref{fig:triangle} with input sets $X_1$, $X_2$, $X_3$ and output sets $A_1$, $A_2$, $A_3$, any distribution $p(a_1, a_2, a_3 \mid x_1, x_2, x_3)$ can be approximated via three randomness sources $Q_i$, each of cardinality
\begin{equation}\label{eq:triangle}
    \abs{Q_i} = \ceil*{ \frac{9}{2 \epsilon^2}\ln(2 \abs{X_1} \abs{X_2} \abs{X_3} \abs{A_1} \abs{A_2} \abs{A_3}) }\ .
\end{equation}
Moreover, we show that for a general network as depicted in \cref{fig:general}, we can approximate $p(a \mid x)$ by replacing $m$ randomness sources $R_i$ with $m$ randomness sources $Q_i$, each of cardinality
\begin{equation}\label{eq:general}
    \abs{Q_i} = \ceil*{ \frac{m^2}{2 \epsilon^2} \ln(2 \abs{X} \abs{A}) }\ .
\end{equation}
In the case of independent inputs and outputs given to $h$ parties with each having size $\nx$ and $\na$, we have $\abs{X} = \nx^h$ and $\abs{A} = \na^h$, meaning that the right-hand side of \cref{eq:general} becomes
\begin{equation}\label{eq:lnxa}
    \ln(2\abs{X}\abs{A}) = \ln(2) + h \ln( \nx \na )\ .
\end{equation}
Notably, this implies our bound scales linearly in the number of parties $h$, whereas the best-known upper bound for the exact case is exponential in $h$.

\section{Our approach}

Denoting the randomness sources by random variables $R_i$, we study how much the number of outcomes of each $R_i$ may be reduced without greatly affecting the network.
The high-level idea is to replace each source $R_i$ with a new source $Q_i$ having a smaller set of possible values, without otherwise changing the network structure.
These $Q_i$ are created by randomly sampling from $R_i$ according to their probabilities, and assigning the probability of each outcome proportional to how many times it was sampled.
When $\abs{R_i}$ is much larger than the number of samples, many unlikely values from $R_i$ may never be chosen, thereby reducing the cardinality of $Q_i$.
As values with higher probabilities in $R_i$ tend to be correspondingly more frequently sampled by $Q_i$, the randomness sources bear a significant resemblance, and hence their exchange does not significantly impact the network distribution.

As we cannot rely on any specific network structure to obtain general bounds, we encapsulate the entire network as a single object with inputs and outputs, while preserving the inner workings.
This effectively treats all involved parties as one, and in practice makes the result applicable to any communication network, with any sort of additional resources involved.
This means we can compress shared randomness in networks containing shared entanglement and even resources that fall outside of the theory of quantum mechanics.

The bound itself is obtained through concentration inequalities, specifically through repeat applications of Hoeffding's inequality.
We start by selecting a single source of randomness $R$ to compress, isolating the rest of the network into the distribution $p(a \mid x, r)$ dependent on its values.
We apply Hoeffding's inequality on this distribution while substituting $R$ as described previously.
This gives us the following compression result.

\begin{theorem}\label{thm:single}
    Consider a communication network with a protocol generating $p(a \mid x)$ for $a \in A$ and $x \in X$ using at least one independent randomness source $R$.
    Given a tolerance level $\epsilon > 0$ and an integer $n$ satisfying
    \begin{equation}\label{eq:single}
        n > \frac{\ln\big( 2 \abs{X} \abs{A} \big)}{2 \epsilon^2}\ ,
    \end{equation}
    there exists a randomness source $Q$ of cardinality at most $n$ that, when sampled by the same protocol instead of $R$ and distributed to the same parties as $R$, results in a distribution $\hat{p}(a \mid x)$ with $\norm{\hat{p}(a \mid x) - p(a \mid x)}_\infty < \epsilon$.
\end{theorem}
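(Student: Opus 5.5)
We use the probabilistic method together with Hoeffding's inequality and a union bound over all pairs $(a,x)$. Isolate the source $R$ and write the network distribution as
\begin{equation*}
    p(a \mid x) = \E_{r \sim R}\big[\, p(a \mid x, r) \,\big]\ ,
\end{equation*}
where $p(a \mid x, r)$ is the output distribution of the rest of the network (all other randomness sources, private randomness, entanglement, communication, and so on) when $R$ takes the value $r$. Independence of $R$ from the other resources is what makes this conditioning legitimate. Nothing about the structure of $p(a\mid x,r)$ is needed beyond the fact that it is a number in $[0,1]$.

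The construction draws $n$ i.i.d.\ samples $r^{(1)},\dots,r^{(n)}$ from $R$. Let $Q$ be the empirical distribution: $Q$ takes value $r$ with probability $\abs{\{j : r^{(j)} = r\}}/n$. Then $\abs{\supp Q} \le n$. Replacing $R$ by $Q$ and distributing it to the same parties leaves the protocol unchanged and yields
\begin{equation*}
    \hat p(a \mid x) = \frac{1}{n}\sum_{j=1}^{n} p\big(a \mid x, r^{(j)}\big)\ .
\end{equation*}

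Fix a pair $(a,x)$. The variables $Z_j = p(a\mid x, r^{(j)})$ are i.i.d., take values in $[0,1]$, and have mean $p(a\mid x)$. Hoeffding's inequality gives
\begin{equation*}
    \Pr\big[\abs{\hat p(a\mid x) - p(a\mid x)} \ge \epsilon\big] \le 2e^{-2n\epsilon^2}\ .
\end{equation*}
A union bound over the $\abs{X}\abs{A}$ pairs bounds the failure probability by $2\abs{X}\abs{A}e^{-2n\epsilon^2}$. This quantity is strictly less than $1$ exactly when $n > \ln(2\abs{X}\abs{A})/(2\epsilon^2)$, which is the hypothesis \eqref{eq:single}. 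Hence some realization of the samples satisfies $\abs{\hat p(a\mid x)-p(a\mid x)}<\epsilon$ for all $(a,x)$ simultaneously, and the corresponding $Q$ is the desired source.

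The main obstacle is to justify rigorously that swapping $R$ for $Q$ changes the output distribution only through the average over $r$. For a general network with quantum or post-quantum resources and communication, this requires that $R$ be independent of every other resource and that the parties' operations depend on $R$ only through its value. The point of \cref{thm:single} is that the whole rest of the network can be absorbed into the function $r \mapsto p(a\mid x,r)$. Once this is in place, the concentration argument is routine.

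A minor technicality arises if $R$ has continuous support. In that case we interpret $p(a\mid x,r)$ as a measurable function and apply Hoeffding's inequality in the same way, since it requires only boundedness.
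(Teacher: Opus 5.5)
Your proposal is correct and follows the paper's proof essentially step for step. Both draw $n$ i.i.d.\ samples from $R$, take their empirical distribution as $Q$, apply Hoeffding's inequality for each fixed $(a,x)$, and use a union bound over $A\times X$ to show that some sampling succeeds. Your last step is actually slightly cleaner than the paper's. You require $2\abs{X}\abs{A}e^{-2n\epsilon^2}<1$, which is exactly the strict hypothesis on $n$. The paper instead writes a strict inequality after complementing the union bound, where only $\geq$ is justified, and then asks for ``$\geq 0$'', which would only yield a non-strict condition on $n$.
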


We note that the infinity norm maximizes over all choices of $x$ and $a$.
The proof of \cref{thm:single} (in \cref{appx:proof-single}) is non-constructive -- while it gives us the procedure used to construct $Q$, it does not reveal any particular sampling yielding the required precision target.
It guarantees that at least one such sampling exists, though finding it in practice is entirely up to luck.

We can also apply this compression technique to multiple sources in succession, yielding the result below.
\begin{corollary}\label{cor:multiple}
    Consider a communication network with a protocol generating $p(a \mid x)$ for $a \in A$ and $x \in X$ using at least $m$ independent randomness sources $R_i$.
    Given a tolerance level $\epsilon > 0$, integers $n_i$, and $\delta_i > 0$ satisfying%
    \begin{equation}\label{eq:multiple}
        n_i > \frac{\ln\big( 2 \abs{X} \abs{A} \big)}{2 (\epsilon \delta_i)^2} \text{ for } i \in \{1, \ldots, m\}\ ,\ \
        \sum_{i=1}^m \delta_i = 1\ ,
    \end{equation}
    there exist randomness sources $Q_i$ of cardinalities at most $n_i$ that, when sampled by the same protocol instead of $R_i$ and distributed to the same parties as $R_i$, result in a distribution $\hat{p}(a \mid x)$ with $\norm{\hat{p}(a \mid x) - p(a \mid x)}_\infty < \epsilon$.
\end{corollary}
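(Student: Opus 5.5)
The plan is to apply \cref{thm:single} $m$ times in succession, once per source, with tolerance $\epsilon\delta_i$ at step $i$. The errors then add up via the triangle inequality for $\norm{\cdot}_\infty$. Define a sequence of intermediate distributions $p_0 = p$, and for $k \ge 1$ let $p_k(a \mid x)$ be the distribution produced when $R_1,\ldots,R_k$ have been replaced by $Q_1,\ldots,Q_k$ and $R_{k+1},\ldots,R_m$ are still in place. At the end, $p_m = \hat{p}$.

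For the inductive step, fix $k$ and assume $Q_1,\ldots,Q_{k-1}$ have already been built. View the network that uses $Q_1,\ldots,Q_{k-1}, R_k,\ldots,R_m$ as a communication network in its own right, with the same input set $X$ and output set $A$, generating $p_{k-1}(a \mid x)$. The sources $Q_1,\ldots,Q_{k-1}$ are sampled independently of everything else, so $R_k$ is still an independent randomness source of this network. All other sources, the $Q_j$ and the remaining $R_j$, are absorbed into the protocol, exactly as \cref{thm:single} permits when it treats the rest of the network as a black box producing $p(a \mid x, r)$.

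Since $n_k > \ln(2\abs{X}\abs{A})/(2(\epsilon\delta_k)^2)$, \cref{thm:single} applies with tolerance $\epsilon\delta_k$. It yields a source $Q_k$ of cardinality at most $n_k$, distributed to the same parties as $R_k$, such that
\begin{equation*}
    \norm{p_k(a \mid x) - p_{k-1}(a \mid x)}_\infty < \epsilon\delta_k .
\end{equation*}
After $m$ steps, the triangle inequality gives
\begin{equation*}
    \norm{\hat{p}(a \mid x) - p(a \mid x)}_\infty \le \sum_{k=1}^m \norm{p_k(a \mid x) - p_{k-1}(a \mid x)}_\infty < \sum_{k=1}^m \epsilon\delta_k = \epsilon .
\end{equation*}

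The main point to check is the legitimacy of each reapplication. \Cref{thm:single} must hold for a network whose other sources have already been modified. This holds because the theorem places no restriction on the network beyond having one independent source $R$ together with fixed $\abs{X}$ and $\abs{A}$. Replacing $R_j$ by $Q_j$ changes neither the input/output alphabets nor the routing of sources to parties, and it preserves independence between sources, since each $Q_j$ is a fixed distribution sampled independently. I would state this explicitly, noting that the $Q_j$ are fixed once chosen and are not random objects correlated with $R_k$, so the hypothesis of \cref{thm:single} is met at every step.
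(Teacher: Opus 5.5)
Your proposal is correct and takes essentially the same approach as the paper's proof: compress one source at a time via \cref{thm:single} at tolerance $\epsilon\delta_i$, which gives intermediate distributions $p_0 = p, p_1, \ldots, p_m = \hat{p}$, and then add the errors using the triangle inequality. Two points in your write-up are slightly more careful than the paper's. You justify explicitly why each reapplication of \cref{thm:single} is legitimate, since the already-chosen $Q_j$ are fixed independent sources absorbed into the protocol. You also track strict inequalities, so you obtain $<\epsilon$ rather than the paper's $\le\epsilon$ chain.
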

As mentioned above, extending \cref{thm:single} to \cref{cor:multiple} involves compressing each randomness source with a higher amount of precision.
This enables using the triangle inequality to bound the total deviation from $p(a \mid x)$ by $\epsilon$ (see \cref{appx:proof-multiple} for details).
The flexibility of exactly how to divvy up this amount is encoded in the choice of $\delta_i$ above.
In practice, this allows obtaining lower sizes on some randomness sources at the cost of larger sizes on others.
Still, when considering either the maximum or total cardinality of the randomness sources, it is often advantageous to simply take $\delta_i = 1/m$, with $m$ being the number of compressed randomness sources.
In this case, every randomness source has the same cardinality $\abs{Q_i}$ satisfying the previously stated condition in \eqref{eq:general}.
We use $\delta_i = 1/m$ throughout the rest of the Letter; as an example, applying this to the triangle scenario yields \eqref{eq:triangle}.

We additionally note that \cref{cor:multiple} preserves the structure of randomness sources under compression, by ensuring that each compressed $Q_i$ is available to the same parties as the corresponding original $R_i$.
If structure preservation is optional, then one may instead apply \cref{thm:single} on the joint randomness source $(R_1, \ldots, R_m)$ to recover a cardinality bound independent of $m$.
For instance, applying the latter approach to the triangle scenario in \cref{fig:triangle} modifies the network structure to that of a tripartite Bell scenario, where a single randomness source is shared between all parties who received any shared randomness from the original sources.

The bound in \eqref{eq:multiple} only depends on the number of the compressed randomness sources $m$, and not their original cardinality.
In practice, if a particular favorable network instance guaranteed to produce $p(a \mid x)$ is known, we may further reduce the cardinality of the compressed randomness sources by leaving some sources uncompressed.
We then get $m$ by counting only the compressed randomness sources.

\section{Comparison to existing bounds}

To the best of our knowledge, the bound in \cref{cor:multiple} is the first of its kind in bounding the dimension of shared randomness when approximating network distributions, although similar bounds exist for matching such distributions exactly.
Using Carath\'{e}odory's theorem, the size of each $R_i$ is upper bounded by $\abs{X} \abs{A} + 1$ \cite[Proposition~2]{rosset2018universal}; for $h$ parties with input and output sizes $\nx$ and $\na$ this bound is
\begin{equation}\label{eq:exact}
    (\nx \na)^h + 1\ .
\end{equation}
Note that this exact-case bound scales exponentially in the number of parties $h$, whereas our bound for approximation (with \eqref{eq:lnxa} taken into account) scales only linearly in $h$.
The approximation bound becomes exponential only if the number of randomness sources $m$ grows exponentially with $h$, which can occur when we allow a source of shared randomness between every subset of the $h$ parties, for example.
It is not known how tight the bound in \cite{rosset2018universal} is among network-agnostic bounds, but it could be the case that a similar difference remains between our bound and a tight upper bound for the exact case.
The following distribution exemplifies that there can be an exponential separation between our upper bound for approximating and a lower bound for exact matching.

\section{Examples with exponential separations}

Consider a \emph{multipartite} Bell scenario with $h$ parties sharing a single source of randomness $R$.
Suppose there are no inputs (i.e., $X_1 = \cdots = X_h = \{ 0 \}$), and output sets $A_1 = \cdots = A_h$ are of size $\na$.
Let $R$ be a random variable over $\{ 1, \ldots, \na \}$ with nonzero probabilities $q$ and suppose each party samples $r \in R$ and directly outputs the result.
This yields the distribution
\begin{equation}\label{eq:scenario-noinput}
    p(a_1, \ldots, a_h) =
    \begin{cases}
        q(a_1) & \text{if } a_1 = \cdots = a_h \\
        0 & \text{otherwise}\ .
    \end{cases}
\end{equation}
In short, they wish to generate random maximally correlated outputs.
One may ask whether this can be done with a randomness source of smaller cardinality.
But, as we show in \cref{appx:lower-noinput}, shared randomness of dimension at least $\na$ is necessary to generate this distribution exactly.
In comparison, our bound yields $(\ln(2) + h \ln(\na)) / (2 \epsilon^2)$, which is logarithmic in $\na$.

The next example has $A_1 = A_2 = \{ 0, 1 \}$, $X_1 = X_2$, and the two parties must generate the distribution
\begin{equation}\label{eq:scenario-input}
    p(a_1, a_2 \mid x_1, x_2) =
    \begin{cases}
        1/2 & \text{if } a_1 = a_2,\ x_1 = x_2\ ,\\
        0 & \text{if } a_1 \neq a_2,\ x_1 = x_2\ ,\\
        1/4 & \text{if } x_1 \neq x_2\ ,
    \end{cases}
\end{equation}
meaning their outputs must be equiprobable and matching when their inputs match, and uniformly random otherwise.
One valid strategy that attains this distribution exactly is for Alice and Bob to flip $\log_2(\nx+1)$ shared coins (i.e., a randomness source of size $\nx+1$) and output the XOR of a subset determined by their respective inputs.
Moreover, the amount of shared randomness used by this strategy matches the minimum necessary for this scenario; see \cref{appx:lower-input} for a lower bound.
The exact upper bound attains similarly polynomial scaling with $(2 \nx)^2 + 1$, yet the approximation bound yields $e^{-2} (\ln(8)/2 + \ln(\nx))$, which is logarithmic in $\nx$.

We note one additional consequence of the exponential separation between the exact and approximate case upper bounds.
It may seem that the premise of \cref{cor:multiple} makes it ignore much of the relevant network structure that could otherwise be utilized for a tighter bound.
Though this may well be the case, here we illustrate that the scaling of our bound makes such a difference negligible.
Suppose Network~1 uses $r_1$ to generate $p_1(a_1 \mid x_1) = \E_{r_1} \big[ p_1(a_1 \mid x_1, r_1) \big]$ and Network~2 uses $r_2$ to generate $p_2(a_2 \mid x_2) = \E_{r_2} \big[ p_2(a_2 \mid x_2, r_2) \big]$, operating in conjunction as in \cref{fig:product}.
Their joint distribution is $p(a_1, a_2 \mid x_1, x_2) = p_1(a_1 \mid x_1) \, p_2(a_2 \mid x_2) = \E_{r_1, r_2} \big[ p_1(a_1 \mid x_1, r_1) \, p_2(a_2 \mid x_2, r_2) \big]$.

\begin{figure}[ht]
    \centering
    \begin{tikzpicture}[> = stealth, semithick]
    \tikzstyle{randomness}=[circle,draw=black,fill=black!20,inner sep=0.5mm]
    \tikzstyle{input}=[draw=black,fill=black!5,inner sep=1.5mm]
    \tikzstyle{output}=[draw=black,fill=black!5,inner sep=1.5mm]
    \tikzstyle{party}=[thick,draw=black,fill=white,inner sep=1.5mm]
    \node[party] (N1) at (-1.5,0) {Network 1};
    \node[party] (N2) at (1.5,0) {Network 2};
    \draw[dashed] (0,-1.25) -- (0,1.25);
    \node[output,below=.5cm of N1] (a1) {$a_1$};
    \node[output,below=.5cm of N2] (a2) {$a_2$};
    \draw[->] (N1) -- (a1);
    \draw[->] (N2) -- (a2);
    \node[randomness,above=.5cm of N1,xshift=.5cm] (r1) {$r_1$};
    \node[randomness,above=.5cm of N2,xshift=.5cm] (r2) {$r_2$};
    \draw[~] (r1) -- (N1);
    \draw[~] (r2) -- (N2);
    \node[input,above=.5cm of N1,xshift=-.5cm] (x1) {$x_1$};
    \node[input,above=.5cm of N2,xshift=-.5cm] (x2) {$x_2$};
    \draw[->] (x1) -- (N1);
    \draw[->] (x2) -- (N2);
\end{tikzpicture}
    \caption{Product network composition.}
    \label{fig:product}
\end{figure}
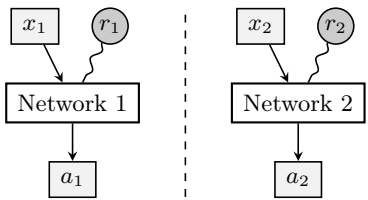

Suppose we are aware that the networks are completely separated, and thus attempt to bound their composition by combining individual bounds.
When the respective cardinalities of the inputs and outputs are equal (i.e., $\abs{X_1} = \abs{X_2} = \nx$ and $\abs{A_1} = \abs{A_2} = \na$), the size of the randomness sources $\nx \na + 1$ does not change in the exact case.
In the approximate case we need to bound each network separately to $\epsilon/2$ precision, yielding a combined bound of $\frac{1}{2} \left( \frac{2}{\epsilon} \right)^2 [\ln(2) + \ln(\nx \na)]$.
But, what if instead we are unaware of the separation and compress the product network directly?
Now, the exact bound is $(\nx \na)^2 + 1$, whereas the approximate bound is $\frac{1}{2} \left( \frac{2}{\epsilon} \right)^2 [\ln(2) + 2 \ln(\nx \na)]$.
Note how these compare to the bounds that account for the separation; while the exact bound is roughly the product of its components, the approximate bound only doubles.
Any ignored network structure thus does not greatly contribute to significant differences in the upper bound for the approximate case.

Our approximate case bound attains exponentially better scaling not only in the network size $h$ but also in the size of the question and answer sets.
Fixing $h$ and supposing that each of the $h$ parties has input and output sets of cardinality $\nx$ and $\na$ respectively, the approximate case bound in \eqref{eq:general} scales logarithmically with respect to $\nx \na$ whereas the exact case bound in \eqref{eq:exact} scales polynomially with degree $h$.

In spite of the improvements in asymptotic scaling, it is reasonable to wonder whether the various constants attached to the approximate case bound make it viable for networks of practical sizes.
To this end, we provide numerical comparisons of the bounds in \cref{fig:n,fig:epsilon}.

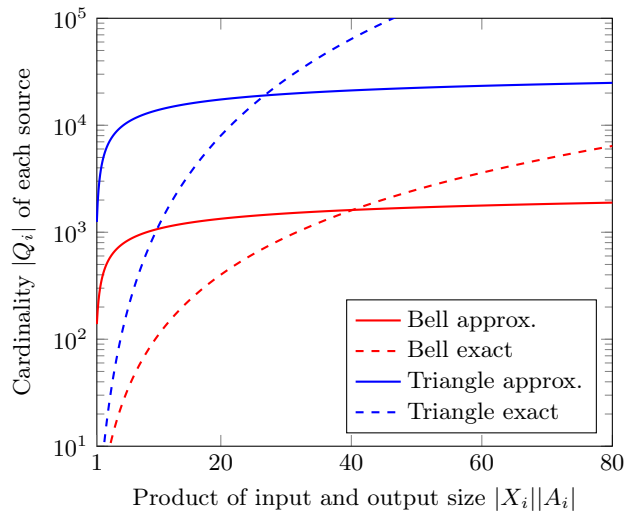
\begin{figure}[htp]
    \centering
    \begin{tikzpicture}
    \begin{axis}[
        ymode=log,
        legend pos=south east,
        legend cell align={left},
        every axis plot/.append style={thick},
        xmin=1, xmax=80,
        ymin=1e1, ymax=1e5,
        domain=1:80,
        samples at={1,1.125,...,3,3.25,3.5,...,7,7.5,8,...,15,16,17,...,32,34,36,...,64,68,72,76,80},
        extra x ticks={1},
        xlabel={Product of input and output size $\nx \na$},
        ylabel={Cardinality $\abs{Q_i}$ of each source},
    ]
        \newcommand{\fAppx}{(\varM/0.05)^2*ln(2*x^\varH)/2};
        \newcommand{\fExact}{x^\varH+1};
        \newcommand{\varH}{2};\newcommand{\varM}{1};
        \addplot[color=red]{\fAppx};
        \addplot[color=red,style=dashed]{\fExact};
        \renewcommand{\varH}{3};\renewcommand{\varM}{3};
        \addplot[color=blue]{\fAppx};
        \addplot[color=blue,style=dashed]{\fExact};
        \legend{Bell approx.,Bell exact,Triangle approx.,Triangle exact};
    \end{axis}
\end{tikzpicture}
    \caption{%
        Comparison of the exact upper bound in \eqref{eq:exact} and our approximate upper bound in \eqref{eq:general} on the cardinality of each source of randomness, plotted as a function of the size of each party's inputs and outputs.
        Here we have fixed the error tolerance to be $\epsilon = 0.05$.
    }
    \label{fig:n}
\end{figure}

In \cref{fig:n} we plot the bounds directly, comparing the cardinality $\abs{Q_i}$ of each source of shared randomness attained by the two bounds.
Specifically, picking any set of parameters above the plotted lines guarantees that any distribution can be constructed (possibly $\epsilon$-approximately) for a network of a given size with said amount of shared randomness, as guaranteed by the respective bound.
As such, for any values of input/output sizes $\nx \na$ where the solid lines lie below the dashed lines of the respective color, our approximate case bound outperforms the exact case bound.

In addition to the approximate case bound performing comparatively better for larger values of $\nx \na$, the same is true for larger values of $h$ -- as evidenced in \cref{fig:n} by the blue lines in the triangle scenario intersecting at a lower value of $\nx \na$ compared to the red lines for the Bell scenario.
If we instead focus solely on such points of intersection and how they vary with other parameters, we obtain \cref{fig:epsilon}.
Areas above the plotted lines indicate instances where our approximate case bound outperforms the existing exact case bound.
This performance improvement is more significant for networks with large $h$ and small $m$.

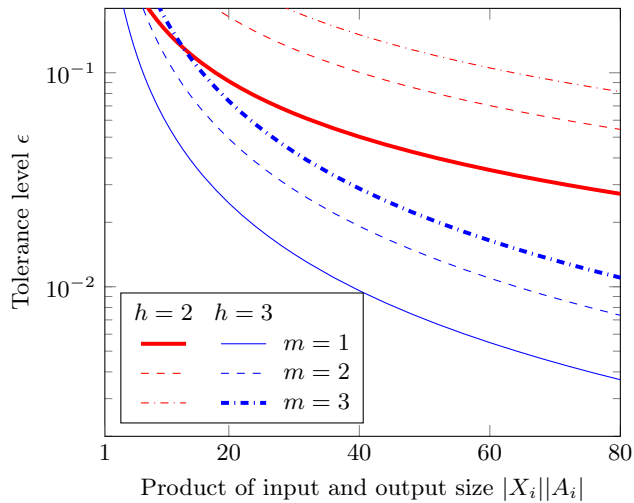
\begin{figure}[htp]
    \centering
    \begin{tikzpicture}
    \begin{axis}[
        ymode=log,
        legend pos=south west,
        xmin=1, xmax=80,
        ymin=2e-3, ymax=2e-1,
        domain=1:80,
        samples at={1,1.5,...,8,9,10,...,16,18,20,...,32,36,40,...,80},
        extra x ticks={1},
        xlabel={Product of input and output size $\nx \na$},
        ylabel={Tolerance level $\epsilon$},
        legend columns=2,
    ]
        \newcommand{\fEqual}{\varM/(2*(x^\varH+1)/ln(2*x^\varH))^(1/2)};
        \addlegendimage{legend image with text={$h=2$}};
        \addlegendentry{};
        \addlegendimage{legend image with text={$h=3$}};
        \addlegendentry{};
        \newcommand{\varH}{2};\newcommand{\varM}{1};
        \addplot[color=red,line width=1.5pt]{\fEqual};
        \addlegendentry{};
        \renewcommand{\varH}{3};
        \addplot[color=blue]{\fEqual};
        \addlegendentry{$m=1$};
        \renewcommand{\varH}{2};\renewcommand{\varM}{2};
        \addplot[color=red,style=dashed]{\fEqual};
        \addlegendentry{};
        \renewcommand{\varH}{3};
        \addplot[color=blue,style=dashed]{\fEqual};
        \addlegendentry{$m=2$};
        \renewcommand{\varH}{2};\renewcommand{\varM}{3};
        \addplot[color=red,style=dash dot]{\fEqual};
        \addlegendentry{};
        \renewcommand{\varH}{3};
        \addplot[color=blue,style={dash dot},line width=1.5pt]{\fEqual};
        \addlegendentry{$m=3$};
    \end{axis}
\end{tikzpicture}
    \caption{%
        Minimum value of tolerance level $\epsilon$ such that the approximate bound in \eqref{eq:general} is tighter than the exact bound in \eqref{eq:exact}.
        Thicker lines correspond to the Bell and triangle scenarios.
    }
    \label{fig:epsilon}
\end{figure}

\section{Conclusions and future work}

In this Letter, we demonstrate how to compress the size of shared randomness in network settings.
We derived a novel bound on the amount of shared randomness needed to approximate any probability distribution generated by a communication network, which scales much better than the best currently-known upper bound when matching the distribution exactly.
We also provided exponential separations between these cases by providing our upper bound as well as proving an exact lower bound for specific distributions.

A potential future direction is to examine the possibility of compressing other types of resources in networks.
As large quantum resources are typically more difficult to create, maintain, and utilize in comparison to classical resources, compressing them is of great practical interest.
This is studied in prior work \cite{stark2016compressibility,panahi2025upper}, which approximate distributions of quantum models and bound local Hilbert space dimension.
Recent work \cite{coladangelo2020two} additionally provides strong evidence against the possibility of compressing shared entanglement, as there exist non-local games that may only be approximated $\epsilon$-close to their value only with entanglement of dimension $2^{\Omega(\epsilon^{-1/8})}$.
Nevertheless, confirming the impossibility of quantum resource compression would likewise be interesting in providing a distinction with the efficient compression of shared randomness shown in this Letter.

\section{Acknowledgments}

Y.U.~was supported by the Perimeter Institute for Theoretical Physics and funded by Mike and Laura Serbinis.
Part of the work was done when Y.U., J.S., and A.A.~were at the Perimeter Institute.
A.A.~acknowledges support through the NSF award QCIS-FF: Quantum Computing \& Information Science Faculty Fellow at Harvard University (NSF 2013303), and NSF Award Numbers 2430375, 2238836.
J.S.~and A.Z.~are supported in part by the Commonwealth of Virginia’s Commonwealth Cyber Initiative (CCI) under grant number 469351.

\nocite{*}
\bibliography{references}

\appendix

\section{Proof of Theorem~\ref{thm:single}}\label{appx:proof-single}

\begin{proof*}
    Let $r^{(1)}, \ldots, r^{(n)}$ be random variables representing $n$ independent samples from $R$ according to its probabilities.
    Define $Q$ to be the uniform distribution over these samples (proportional to repeated samples), meaning
    \begin{equation}
        Q(r) = \frac{1}{n} \sum_{i=1}^n \mathbbm{1}[ r^{(i)} = r]\ ,
    \end{equation}
    where $\mathbbm{1}$ is the indicator function, and hence $\abs{Q} \leq n$.
    We consider the following protocol: all parties follow their previous strategies, sampling from the same shared resources -- save for $R$, for which a sample $r$ is instead drawn from $Q$ and distributed to the same parties as previously.
    This produces the distribution
    \begin{equation}
        \hat{p}(a \mid x) = \frac{1}{n} \sum_{i=1}^n p(a \mid x, r^{(i)})\ ,
    \end{equation}
    which approximates $p(a \mid x)$ since $\E \big[ \hat{p}(a \mid x) \big] = p(a \mid x)$.
    Fix an arbitrary $(a,x) \in A \times X$.
    Then, $p( a \mid x, r^{(i)})$ for each $i \in \{1, \ldots, n\}$ is a collection of $n$ independent random variables bounded between $0$ and $1$.
    By Hoeffding's inequality \cite{chernoff1952measure,hoeffding1963probability},
    \begin{equation}
        \Pr \Big( \abs[\big]{p(a \mid x) - \hat{p}(a \mid x)} \geq \epsilon \Big) \leq 2 \exp(-2 n \epsilon^2)\ .
    \end{equation}
    By the union bound,
    \begin{equation}\label{eq:multi-union}
    \begin{aligned}
        \Pr \Big( \exists (a,x) \in A \times X, \abs[\big]{p(a \mid x) - \hat{p}(a \mid x)} \geq \epsilon \Big) \\
        \leq 2 \abs{A} \abs{X} \exp(-2 n \epsilon^2)\ ,
    \end{aligned}
    \end{equation}
    which once inverted becomes
    \begin{equation}
    \begin{aligned}\label{eq:proof-single-prob}
        \Pr \Big( \forall (a,x) \in A \times X, \abs[\big]{p(a \mid x) - \hat{p}(a \mid x)} < \epsilon \Big) \\
        > 1 - 2 \abs{A} \abs{X} \exp(-2 n \epsilon^2)\ .
    \end{aligned}
    \end{equation}
    Thus when we have
    \begin{equation}\label{eq:proof-single-condition}
        1 - 2 \abs{A} \abs{X} \exp(-2 n \epsilon^2) \geq 0\ ,
    \end{equation}
    the left-hand probability of \eqref{eq:proof-single-prob} is nonzero, meaning there exists a sampling $r^{(1)}, \ldots, r^{(m)}$ such that
    \begin{equation}
        \norm{p(a \mid x) - \hat{p}(a \mid x)}_\infty = \hspace*{-1em} \max_{\raisebox{-0.25em}{$\scriptstyle (a,x) \in A \times X$}} \hspace*{-0.8em} \abs{p(a \mid x) - \hat{p}(a \mid x)} < \epsilon\ .
    \end{equation}
    Rearranging \eqref{eq:proof-single-condition} completes the proof.
\end{proof*}

\section{Proof of Corollary~\ref{cor:multiple}}\label{appx:proof-multiple}

\begin{proof*}
    Denote $p_0(a \mid x) = p(a \mid x)$, and for all $i \in \{1, \ldots, m\}$ let $p_i(a \mid x)$ be the correlations of the network obtained by compressing $R_i$ up to tolerance $\epsilon \delta_i$ compared to $p_{i-1}(a \mid x)$.%
    \begin{equation}
    \begin{aligned}
        \norm{\hat{p}(a \mid x) - p(a \mid x)}_\infty
        &\leq \sum_{i=1}^m \norm{p_i(a \mid x) - p_{i-1}(a \mid x)}_\infty\hspace*{-.25em} \\
        &\leq \sum_{i=1}^m \epsilon \delta_i
        = \epsilon\ .
    \end{aligned}
    \end{equation}
    The conditions for this are given by \cref{thm:single} as
    \begin{equation}
        n_i \geq \frac{\ln\big( 2 \abs{A} \abs{X} \big)}{2 (\epsilon \delta_i)^2} \quad
        \forall i \in \{1, \ldots, m\}\ ,
    \end{equation}
    as needed.
\end{proof*}

\section{Lower bound on scenario (\ref{eq:scenario-noinput})}\label{appx:lower-noinput}

We consider an earlier example of generating correlations in a multipartite Bell scenario with no inputs.
There are $h \geq 2$ parties with no communication except sharing the sampled value $r$ of a randomness source $R$, and they wish to generate perfectly correlated outputs with their individual outputs modeling some distribution $q$.
We show that to do so, the cardinality of the shared randomness source needs to be at least $\na := \abs{\supp(q)}$.

Under this setup, each possible generated distribution $p(a_1, \ldots, a_n)$ is of the form
\begin{equation}
     \sum_{r \in \supp(q)} p(r)\, p_{A_1}(a_1 \mid r) \cdots\, p_{A_h}(a_h \mid r)\ ,
\end{equation}
with $p_{A_i} (a_i \mid r)$ being the probability of the $i$th party outputting $a_i$ when receiving $r$.
Assuming that the desired perfectly correlated output $\mathbbm{1}[a_1 = \ldots = a_h] \, q(a_1)$ is attained, we first show that any individual player's strategy must be deterministic.

\begin{lemma}
    The strategy $p_{A_1}$ is deterministic.
    That is, for any $r \in \supp(p) = \{ r : p(r) > 0 \}$ there exists an $a \in A_1$ such that $p_{A_1} (a \mid r) = 1$.
\end{lemma}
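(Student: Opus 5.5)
We argue by contradiction, using only the fact that the target distribution puts zero weight on every non-diagonal output tuple. Suppose some $r_0$ with $p(r_0) > 0$ has a non-deterministic strategy for the first party. Then there are two distinct outputs $a \neq a'$ in $A_1$ with $p_{A_1}(a \mid r_0) > 0$ and $p_{A_1}(a' \mid r_0) > 0$.

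Next we look at what the remaining parties do on $r_0$. For each party $j \geq 2$, the conditional distribution $p_{A_j}(\cdot \mid r_0)$ is a probability distribution, so it has some output $b_j$ with $p_{A_j}(b_j \mid r_0) > 0$. Fix one such $b_2$ for the second party, which exists because $h \geq 2$. Since $a \neq a'$, at least one of them, call it $c$, differs from $b_2$.

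Now consider the output tuple $(c, b_2, b_3, \ldots, b_h)$. It is not all-equal, because $c \neq b_2$. Its probability under the generated distribution is a sum of nonnegative terms over $r \in \supp(q)$. In particular it is at least the single term
\begin{equation}
p(r_0)\, p_{A_1}(c \mid r_0)\, p_{A_2}(b_2 \mid r_0) \prod_{j=3}^{h} p_{A_j}(b_j \mid r_0) > 0 .
\end{equation}
The target distribution $\mathbbm{1}[a_1 = \cdots = a_h]\, q(a_1)$, however, assigns probability $0$ to this tuple. This is a contradiction, so $p_{A_1}(\cdot \mid r_0)$ must be a point mass, i.e.\ there is some $a$ with $p_{A_1}(a \mid r_0) = 1$.

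There is no real obstacle here. The only points needing care are nonnegativity, which lets us lower-bound the whole sum by one term, and the restriction to $r$ with $p(r) > 0$. The same argument applies to every party by symmetry, which is what the subsequent counting argument (distinct $r$ must produce distinct common outputs, giving $\abs{R} \geq \na$) will need.
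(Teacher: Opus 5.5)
Your proof is correct and takes the same approach as the paper. The paper argues that on any $r$ with $p(r)>0$, whatever output $b$ the second party produces forces the first party to output $b$ with probability $1$, because off-diagonal tuples have zero target probability. You make this rigorous by exhibiting an explicit non-diagonal tuple whose probability contains a strictly positive term. Read the sum as ranging over all $r$ with $p(r)>0$, rather than the paper's loose ``$\supp(q)$'', so that the $r_0$ term is indeed included.
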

\begin{proof}
    Suppose an arbitrary $r \in \supp(p)$ is sampled and a second party outputs $b \in A_2$.
    To produce the desired distribution, the first party must then also output $b$ with probability $1$ when given the same $r$.
\end{proof}

As such, all randomness in this setting comes from the shared source.
Thus, each value sampled from the shared randomness source maps to only one output, meaning its cardinality must be at least the number of outputs.
We present this argument more formally below.

\begin{lemma}\!\!
    For the scenario in \eqref{eq:scenario-noinput}, $\abs{R}\!\geq\!\abs{\supp(q)}$ is necessary.
    Moreover, $R = \supp(q)$ suffices.
\end{lemma}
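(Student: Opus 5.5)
The plan is to prove necessity and sufficiency separately, using the preceding lemma that each party's strategy is deterministic on $\supp(p)$.

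For \textbf{sufficiency}, take $R = \supp(q)$ with distribution $p(r) = q(r)$. Let every party output the received value $r$ deterministically, i.e. $p_{A_i}(a_i \mid r) = \mathbbm{1}[a_i = r]$. Substituting into the sum gives $\sum_r q(r)\prod_i \mathbbm{1}[a_i = r] = \mathbbm{1}[a_1 = \cdots = a_h]\, q(a_1)$, which is exactly the target.

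For \textbf{necessity}, suppose some strategy attains the target. The preceding lemma, applied to party $1$ (and by symmetry to any party), gives a function $f \colon \supp(p) \to A_1$ with $p_{A_1}(a \mid r) = \mathbbm{1}[a = f(r)]$. Marginalizing the generated distribution onto party $1$'s output yields
\begin{equation}
    q(a) = \sum_{a_2,\ldots,a_h} p(a, a_2, \ldots, a_h) = \sum_{r \in \supp(p)} p(r)\, \mathbbm{1}[f(r) = a].
\end{equation}
Hence for every $a \in \supp(q)$ there is at least one $r \in \supp(p)$ with $f(r) = a$, so $f$ maps $\supp(p)$ onto $\supp(q)$. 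A surjection forces $\abs{R} \geq \abs{\supp(p)} \geq \abs{\supp(q)}$.

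The only delicate point is the preceding lemma, which we take as given. Its justification is that, conditioned on any $r$ with $p(r) > 0$, the outputs are independent across parties, yet the target forces them to coincide almost surely. Two independent random variables that are almost surely equal must both be constant. Given that, the main obstacle is bookkeeping: we must restrict to $r \in \supp(p)$, since values of $R$ with zero probability neither satisfy determinism nor contribute to $q$, and the marginalization must be done correctly so that the surjectivity onto $\supp(q)$ is valid.
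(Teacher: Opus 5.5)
Your proposal is correct and follows essentially the same route as the paper: the fibers $f^{-1}(a)$ of your deterministic map are exactly the paper's disjoint sets $Q_a$. Both arguments marginalize onto party~1 to get $q(a) = \sum_{r \in f^{-1}(a)} p(r)$, which forces a nonempty fiber for every $a \in \supp(q)$; the sufficiency construction is the same, and your sums over $\supp(p)$ (where the paper writes $\supp(q)$) are the cleaner bookkeeping.
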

\begin{proof}
    The ``suffices'' part is trivial by each party outputting $r$.
    For every $a \in A_1$, we define
    \begin{equation}
        Q_a \coloneqq \{r : p(r) > 0,\ p_{A_1}(a \mid r) = 1 \}\ .
    \end{equation}
    We now examine $p(a_1)$, i.e., the marginal of $A_1$.
    \begin{equation}
    \begin{aligned}
        p(a_1) &= \sum_{\substack{a_2, \ldots, a_h \\ r \in \supp(q)}} p(r)\, p_{A_1} (a_1 \mid r) \cdots\, p_{A_h} (a_h \mid r) \\
        &= \sum_{r \in \supp(q)} p(r)\, p_{A_1} (a_1 \mid r)
        = \sum_{r \in Q_{a_1}} p(r)\ .
    \end{aligned}
    \end{equation}
    And for the desired distribution,
    \begin{equation}
        p(a_1) = \sum_{a_2,\ldots,a_h} \mathbbm{1}[a_1=\ldots=a_h]\, q(a_1) = q(a_1)\ .
    \end{equation}
    Thus, $q(a) = \sum_{r \in Q_a} p(r)$.
    For every $a$ such that $q(a) > 0$ we require $Q_a \neq \varnothing$, meaning $\abs{\supp(q)}$ is the number of nonempty $Q_a$s.
    Since the $Q_a$s have empty intersections, we have $\abs{R} \geq \sum_{a \in A_1} \abs{Q_a} \geq \abs{\supp(q)}$, as needed.
\end{proof}

\section{Lower bound on scenario (\ref{eq:scenario-input})}\label{appx:lower-input}

We consider an earlier example where two parties receive one of $\nx$ inputs and must generate bit outputs that are correlated precisely when said inputs match.
We show that to do so, the cardinality of the shared randomness source needs to be at least $\nx + 1$.

Similar to the previous example, the generated distribution $p(a_1, a_2 \mid x_1, x_2)$ is of the form
\begin{equation}
     \sum_{r \in R} p(r)\, p_{A}(a_1 \mid x_1, r)\, p_{B}(a_2 \mid x_2, r)\ ,
\end{equation}
and we begin by ruling out the use of private randomness.

\begin{lemma}\label{lem:scenario-input-det}
    The strategy $p_{A}$ is deterministic.
    That is, for any $x \in X_1$ and $r \in \supp(p) = \{ r : p(r) > 0 \}$ there exists an $a \in A_1$ such that $p_{A} (a \mid x, r) = 1$.
\end{lemma}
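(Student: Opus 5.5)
The plan is to adapt the argument of the analogous lemma in \cref{appx:lower-noinput}: exploit the perfect correlation required when the inputs coincide. Fix $x \in X_1$ and $r \in \supp(p)$. Since $X_1 = X_2$, Bob may also receive input $x$. The target distribution \eqref{eq:scenario-input} then requires $p(a_1 \neq a_2 \mid x, x) = 0$. Expanding,
\begin{equation}
    0 = \sum_{r' \in R} p(r') \sum_{a_1 \neq a_2} p_A(a_1 \mid x, r')\, p_B(a_2 \mid x, r')\ ,
\end{equation}
and every term is nonnegative. So for our fixed $r$ with $p(r) > 0$ we get $p_A(0 \mid x, r)\, p_B(1 \mid x, r) = 0$ and $p_A(1 \mid x, r)\, p_B(0 \mid x, r) = 0$.

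Next I do a short case analysis on the binary output. Write $\alpha = p_A(0 \mid x, r)$ and $\beta = p_B(0 \mid x, r)$. The two vanishing products become $\alpha(1-\beta) = 0$ and $(1-\alpha)\beta = 0$. Suppose $0 < \alpha < 1$. The first equation forces $\beta = 1$, and the second forces $\beta = 0$, which is a contradiction. Hence $\alpha \in \{0,1\}$, which means some $a \in A_1$ satisfies $p_A(a \mid x, r) = 1$. Since $x$ and $r$ were arbitrary, this proves the lemma.

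There is little real obstacle here. The only points needing care are that the lemma should hold for every $x \in X_1$, which uses $X_1 = X_2$ so that Bob can be given the matching input, and that a single $r$ with positive weight is isolated via nonnegativity of the summands. The argument also shows that Bob's strategy is deterministic and agrees with Alice's on matching inputs, i.e.\ $p_B(\cdot \mid x, r) = p_A(\cdot \mid x, r)$. I would record this, since it is presumably what the subsequent counting argument for the $\nx + 1$ lower bound will use.
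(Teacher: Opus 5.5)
Your proof is correct and takes the same approach as the paper: perfect correlation of the outputs on matching inputs $x_1 = x_2 = x$ forces Alice's output distribution at each $r \in \supp(p)$ to be a point mass. Your version makes the paper's one-line argument rigorous by isolating the nonnegative summand for the fixed $r$ and deriving $\alpha(1-\beta) = (1-\alpha)\beta = 0$, and this also covers the case where Bob's own output at that $r$ is randomized.
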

\begin{proof}
    This comes from the outputs needing to be correlated for matching inputs.
    Fix $x \in X_1$ and $r \in \supp(p)$, and suppose the second party outputs $b \in A_2$.
    To produce the desired distribution, the first party must then also output $b$ with probability $1$ for the same $x$ and $r$.
\end{proof}

We then use an orthogonality argument to lower bound the cardinality of shared randomness.

\begin{lemma}
    For the scenario in \eqref{eq:scenario-input}, $\abs{R} \geq \nx + 1$ is necessary.
    Moreover, $\abs{R} = 2^{\ceil{\log_2(\nx + 1)}}$ suffices.
\end{lemma}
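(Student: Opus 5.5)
By \cref{lem:scenario-input-det}, and by the same argument applied to Bob, both strategies are deterministic on $\supp(p)$. We may therefore restrict $R$ to $\supp(p)$ and write $a = f_x(r)$ for Alice and $b = g_x(r)$ for Bob, with $f_x, g_x : R \to \{0,1\}$. Matching inputs force perfect correlation, so $f_x(r) = g_x(r)$ for every $r$ with $p(r)>0$; hence both parties use the same functions $f_x$.

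The plan is to translate the constraints into an inner-product space. Work in $\mathbb{R}^{R}$ with the weighted inner product $\inner{u}{v} = \sum_r p(r) u(r) v(r)$, and set $v_x(r) = (-1)^{f_x(r)}$ for each $x \in X_1$. The constraint $p(a_1=0)=1/2$ for each input gives $\inner{v_x}{\mathbf 1} = 0$, where $\mathbf 1$ is the all-ones vector. For $x \ne x'$, the uniform joint distribution $1/4$ gives
\[
\inner{v_x}{v_{x'}} = \Pr(a_1=a_2) - \Pr(a_1\neq a_2) = 0 .
\]
Each $v_x$ has norm $1$, since $\sum_r p(r) = 1$. So $\{\mathbf 1\} \cup \{v_x : x \in X_1\}$ is an orthonormal family of $\nx+1$ vectors. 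Because $p(r)>0$ on $R$, the weighted inner product is positive definite, so these vectors are linearly independent in a space of dimension $\abs{R}$. This gives $\abs{R} \ge \nx+1$. The main subtlety is making sure the support restriction is done first, so that positive definiteness holds.

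For sufficiency, let $k = \ceil{\log_2(\nx+1)}$ and let $R = \{0,1\}^k$ carry the uniform distribution. Since $2^k - 1 \ge \nx$, we can assign each input $x$ a distinct nonempty subset $S_x \subseteq \{1,\dots,k\}$. On input $x$, each party outputs $\bigoplus_{j \in S_x} r_j$.

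The checks are routine. If $x_1 = x_2$, the two outputs are identical, and a nonempty parity of uniform bits is uniform, which gives the $1/2$ entries. If $x_1 \ne x_2$, the two parities are over distinct nonempty sets, so their XOR is $\bigoplus_{j \in S_{x_1} \triangle S_{x_2}} r_j$ with a nonempty symmetric difference; this XOR is therefore uniform and independent of either parity. So the pair of outputs is uniform on $\{0,1\}^2$, which gives the $1/4$ entries. This yields $\abs{R} = 2^{\ceil{\log_2(\nx+1)}}$.
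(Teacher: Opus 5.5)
Your proof is correct and follows the paper's route. Sufficiency uses the same parity construction over $\ceil{\log_2(\nx+1)}$ uniform bits. Necessity uses the same ingredients: deterministic strategies restricted to $\supp(p)$, the $p$-weighted inner product, pairwise orthogonality of the encoded strategy vectors, and a dimension count. The only difference is cosmetic. You use the $\pm 1$ encoding $(-1)^{f_x}$ and append $\mathbf{1}$ to get an orthonormal family of $\nx+1$ vectors. The paper instead centers the $\{0,1\}$-vectors as $u_x-\mathbf{1}/2$, which equals $-\tfrac12(-1)^{f_x}$, and places them in the $(m-1)$-dimensional subspace $\mathbf{1}^\perp$.
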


\begin{proof}
    For the ``suffices'' part, let the shared randomness be $k = \ceil{\log_2(\nx+1)}$ independent and uniform bits $b_1, \ldots, b_k$.
    We then have that
    \begin{equation}
        \Big\{ \bigoplus_{i \in S} b_i \Big\}_{S \in \mathcal{P}([k]) \setminus \varnothing}\ ,
    \end{equation}
    where $\mathcal{P}([k])$ is the power set of $\{1, \ldots, k\}$, is a collection of $2^k-1$ pairwise independent and uniform RVs.
    The two parties can then sample the variable corresponding to their (one of $\nx \leq 2^k-1$) input values.

    Without loss of generality, consider the distribution $(p(r_1), \ldots, p(r_m))$ with $r_i \in \supp(p)$ of the shared randomness source.
    We have $p(r_i) > 0$ and $\sum_i p(r_i) = 1$.
    Define a weighted inner product on $\mathbb{C}^m$ via
    \begin{equation}
        \inner{x}{y}_p \coloneqq \sum_{i=1}^m p(r_i) \, x_i \, \overline{y_i}\ ,
    \end{equation}
    which is a valid inner product.
    Since strategies are deterministic by \cref{lem:scenario-input-det}, for each $x \in X_i$ we represent the two parties' strategies by vectors $u_x \in \{0,1\}^m$ defined as
    \begin{equation}
        (u_x)_i \coloneqq p_A(1 \mid x, r_i) = p_B(1 \mid x, r_i)\ ,
    \end{equation}
    where equivalence follows from the outputs needing to be perfectly correlated when receiving the same $x$.
    That is, the $i$th element is $1$ when $r_i$ is sampled and the corresponding party outputs $1$, and $0$ otherwise.
    We have
    \begin{equation}
        p(1,1 \mid x,y) = \inner{u_x}{u_y}_p = \begin{cases} 1/2 & x = y\ ,\\ 1/4 & x \neq y\ .\end{cases}
    \end{equation}
    Now, define $v_x = u_x - \mathbf{1}/2 \in \{1/2, -1/2\}^m$, where $\mathbf{1}$ is the all-ones vector.
    This centers the vectors (that is, $\inner{v_x}{\mathbf{1}}_p = 0$), and we get $\inner{v_x}{v_y}_p = \mathbbm{1}[x=y]/4$ since $\inner{u_x}{\mathbf{1}}_p = p_A(1 \mid x) = 1/2$.
    For simplicity, let $n \coloneqq \abs{X_i}$.
    The vectors $v_1, \ldots, v_{n}$ are thus pairwise orthogonal and nonzero, and hence linearly independent.
    In particular, $c_1 v_1 + \ldots + c_{n} v_{n} = 0$ implies
    \begin{equation}
        c_1 \inner{v_1}{v_j}_p + \ldots + c_{n} \inner{v_{n}}{v_j}_p = \inner{0}{v_j}_p\ ,
    \end{equation}
    then $c_j \inner{v_j}{v_j}_p = 0$ and $c_j = 0$ for all $j \in \{1, \ldots, n\}$.
    The vectors are in the subspace
    \begin{equation}
        \mathbf{1}^\perp \coloneqq \{ v \in \mathbb{R}^m : \inner{v}{\mathbf{1}}_p = 0 \}
    \end{equation}
    of dimension $m-1$, which corresponds to an upper bound on $n$ since $v_1, \ldots, v_n$ are linearly independent.
    We thus have $\abs{R} \geq \abs{\supp(p)} = m \geq n + 1 = \nx + 1$.
\end{proof}

\end{document}